\newcommand{\mic}[1]{\textcolor{black}{#1}}
\newcommand{\pb}{STSWSN-PC}
\newcommand{\problem}{sub-tree scheduling for wireless sensor networks with partial coverage}
\newcommand{\auxiliary}{fictitious}
\newtheorem{lemma}{Lemma}
\newtheorem{proposition}{Proposition}
\newtheorem{corollary}{Corollary}
\newtheorem{example}{Example}
\newtheorem{observation}{Observation}
\title{On the Complexity of Sub-Tree Scheduling for Wireless Sensor Networks with Partial Coverage}
\author{ \href{https://orcid.org/0000-0002-8521-896X}{\includegraphics[scale=0.06]{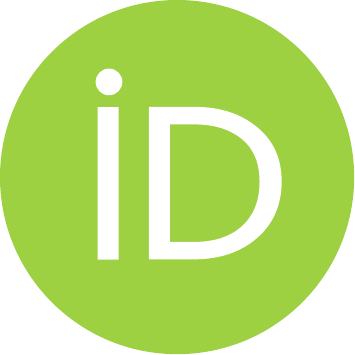}\hspace{1mm}Michele Barbato}\thanks{Corresponding author} \\
	Dipartimento di Informatica\\
	Università degli Studi di Milano \\
	via Celoria 18, 20133 Milano \\
	\texttt{michele.barbato@unimi.it} \\
	\And
	\href{https://orcid.org/0000-0002-5722-5476}{\includegraphics[scale=0.06]{orcid.pdf}\hspace{1mm}Nicola Bianchessi} \\
	Dipartimento di Informatica\\
	Università degli Studi di Milano \\
	via Celoria 18, 20133 Milano \\
	\texttt{nicola.bianchessi@unimi.it}
}
\date{}
\begin{document}
\maketitle


\begin{abstract}
Given an undirected graph $G$ whose edge weights change over $s$ time slots, the \problem{} asks to partition the vertices of $G$ in $s$ non-empty trees such that the total weight of the trees is minimized. In this note we show that the problem is \textbf{NP}-hard in both the cases where $s$ $(i)$ is part of the input and 
$(ii)$ is a fixed instance parameter. In both our proofs we reduce from the cardinality Steiner tree problem. 
We additionally give polynomial-time algorithms for structured inputs of the problem.
\end{abstract}
\keywords{
Wireless sensor network, Sub-tree scheduling, Partial coverage, Complexity}

\section{Introduction}
A central problem in the management of wireless sensor networks is to extend the lifetime of wireless sensors through operating policies ensuring energy efficiency and/or balancing.
Its importance stems from the fact that even a single failure of a wireless sensor can in principle compromise the effectiveness of the whole network.
From the viewpoint of energy balancing, a general \mic{approach} to minimize energy consumption is to split the set of sensors into several non-empty subsets and to subdivide the \mic{planning} horizon into as many slots, so that the subsets of sensors are operated sequentially, one at each time slot.

The \emph{\problem{}} (\pb{}), introduced by~\cite{adasme2019optimal}, is a particular implementation of such \mic{an approach}, with the additional requirements that the sensors operated simultaneously are mutually connected under a tree topology, and each sensor must be active in a unique time slot.
Namely, the \pb{} is defined on an undirected graph $G=(V,E)$ representing the network of sensors, a number $s$, $1\leq s \leq |V|$, of time slots, and vectors $w^1,w^2,\dots, w^s\in\mathbb R^E_+$ of edge-weights (one for each time slot). The aim is to find a set $T_1,T_2,\dots, T_s$ of non-empty vertex-disjoint trees of $G$ covering $V$ and minimizing $\sum_{i=1}^{s} w^i(T_i)$.
In the above description, the vertices of $G$ represent the sensors of the network, the edges represent direct links among
sensors, and the weights represent the time slot-dependent power for transmitting information over the corresponding edges.

The input of the \pb{} is simultaneously defined by the graph $G$, 
the number of time slots $s$, and 
the values of the edge weight vectors. The \pb{} may admit efficient optimization algorithms for structured inputs. For example, when  the weights are constant throughout the time slots (\emph{i.e.}, $w^i\equiv w^j$ for all $i,j=1,2,\dots,s$), the \pb{} is solvable in polynomial time, \emph{e.g.}, by using Kruskal's algorithm \citep{Kruskal56} and terminating it at the first iteration yielding a spanning forest with $s$ trees;
when $s=1$, the \pb{} boils down to the minimum spanning tree (MST) problem on general graphs and, as such, is solvable in polynomial time; 
when $s=|V|$, the optimal solution consists of \mic{arbitrarily} assigning one vertex to each time slot.

However, unstructured instances of the \pb{} have been tackled in~\cite{adasme2019optimal} and~\cite{bianchessi2022} by means of branch-and-bound and branch-and-cut algorithms, respectively. These approaches implicitly suggest that the problem is theoretically intractable, although its computational complexity is unknown to the best of our knowledge. The purpose of this note is to fill in this gap. 

In~\autoref{sec:unfixed} we study the complexity of the \pb{} when $s$ is part of the input, that is, $s$ is not fixed in $\{2,3,\dots, |V|-1\}$; in \autoref{sec:fixed} we study the complexity under the assumption that $s$ is an instance parameter with a prescribed value $\bar s\ge 2$. Through reductions from the (minimum weight) Steiner tree problem~\cite[p. 208]{GJ90}, we show that the \pb{} is \textbf{NP}-hard in both cases, thus justifying the usage of implicit enumeration schemes to solve it.
Finally, in~\autoref{sec:special} we discuss additional structured inputs, other than those mentioned above, for which the \pb{} is solvable in polynomial time.

\section{NP-hardness when the number of time slots is not fixed}\label{sec:unfixed}
Given an undirected connected graph $G=(V,E)$ with $|V|=n$ vertices and a subset $R\subset V$ of \emph{terminal} vertices, a \emph{Steiner tree} is a subtree $T$ of $G$ such that $R\subseteq V(T)$.
Given also a weight $w(e) \in \mathbb Z_+$ for each $e \in E$, computing the Steiner tree of minimum total edge-weight is in general \textbf{NP}-hard, and the problem remains \textbf{NP}-hard if all weights are equal \cite[p. 209]{GJ90}.
In particular, given $w(e) = 1$ for each $e \in E$, the pair $(G,R)$, and $k\in\mathbb Z_+$ with $|R|-1\leq k\leq n-2$, the \emph{cardinality Steiner tree} (CST) problem consisting of determining the existence of a Steiner tree of $G$ with at most $k$ edges is \textbf{NP}-complete.

We now show that an oracle solving the \pb{} in polynomial time allows to solve the CST in polynomial time, thus obtaining that the \pb{} is \textbf{NP}-hard.
We point out the CST with at most three terminals can be solved in polynomial time~\citep{arrighi_et_al}, therefore we restrict ourselves to CST instances with $|R|\ge 4$.

Given $k$ and $(G,R)$ defining a CST instance as above, we construct a new graph $\bar G=(\bar V,\bar E)$ obtained from $G$ by introducing $n-k$ \emph{\auxiliary\ vertices} $v_1^r,v_2^r,\dots,v_{n-k}^r$ for each terminal vertex $r\in R$ and defining $\bar E=E\cup E^R$, where $E^R=\{(v_j^r,r)\colon j=1,2,\dots,n-k,r\in R\}$; that is, each \auxiliary\ vertex is connected precisely to the corresponding terminal vertex.
An example of such a construction is given in~\autoref{fig:non-fixed}.

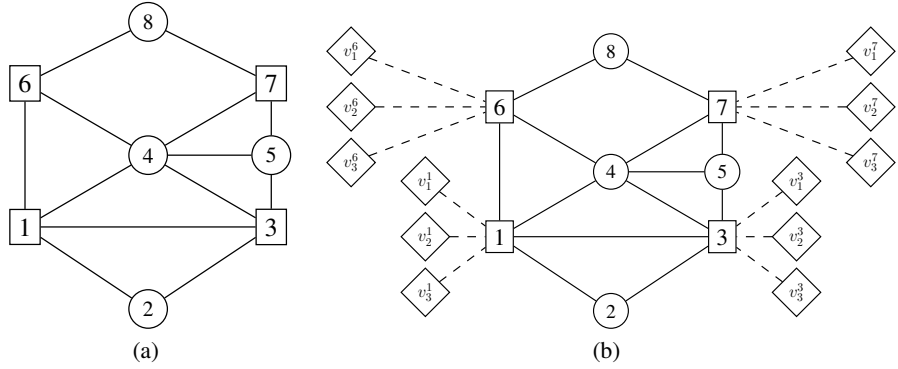
\begin{figure}
\centering

\subfigure[][]{
\resizebox{0.24\textwidth}{!}{
\begin{tikzpicture}[scale=.45, transform shape]
\tikzstyle{every node} = [circle,scale=1.5,fill=white,draw=black]

\node[rectangle,scale=1.2](t1) at (6,0.0) {1};
\node[rectangle,scale=1.2] (t3) at (12,0) {3};
\node[rectangle,scale=1.2] (t6) at (6,3.5) {6};
\node[rectangle,scale=1.2] (t7) at (12,3.5) {7};

\node (2) at (9.0,-2.0) {2};
\node (4) at (9,1.75) {4};
\node (5) at (12,1.75) {5};
\node (8) at (9,5.0) {8};

\draw (t1) -- (2);
\draw (t1) -- (t6);
\draw (t1) -- (4);
\draw (t3) -- (t1);
\draw (t3) -- (2);
\draw (t3) -- (4);
\draw (t3) -- (5);
\draw (4) -- (5);
\draw (4) -- (t6);
\draw (4) -- (t7);
\draw (5) -- (t7);
\draw (t6) -- (8);
\draw (t7) -- (8);

\end{tikzpicture}
}
}
\subfigure[][]{
\resizebox{0.47\textwidth}{!}{
\begin{tikzpicture}[scale=.45, transform shape]
\tikzstyle{every node} = [circle,scale=1.5,fill=white,draw=black]

\node[diamond,scale=0.6] (t101) at (4,1.5) {\Large $v^1_1$};
\node[diamond,scale=0.6] (t102) at (4,0) {\Large $v^1_2$};
\node[diamond,scale=0.6] (t103) at (4,-1.5) {\Large $v^1_3$};

\node[diamond,scale=0.6] (t301) at (14,1.5) {\Large $v^3_1$};
\node[diamond,scale=0.6] (t302) at (14,0) {\Large $v^3_2$};
\node[diamond,scale=0.6] (t303) at (14,-1.5) {\Large $v^3_3$};

\node[diamond,scale=0.6] (t601) at (2,5) {\Large $v^6_1$};
\node[diamond,scale=0.6] (t602) at (2,3.5) {\Large $v^6_2$};
\node[diamond,scale=0.6] (t603) at (2,2) {\Large $v^6_3$};

\node[diamond,scale=0.6] (t701) at (16,5) {\Large $v^7_1$};
\node[diamond,scale=0.6] (t702) at (16,3.5) {\Large $v^7_2$};
\node[diamond,scale=0.6] (t703) at (16,2) {\Large $v^7_3$};

\node[rectangle,scale=1.2](t1) at (6,0.0) {1};
\node[rectangle,scale=1.2] (t3) at (12,0) {3};
\node[rectangle,scale=1.2] (t6) at (6,3.5) {6};
\node[rectangle,scale=1.2] (t7) at (12,3.5) {7};

\node (2) at (9.0,-2.0) {2};
\node (4) at (9,1.75) {4};
\node (5) at (12,1.75) {5};
\node (8) at (9,5.0) {8};

\draw (t1) -- (2);
\draw (t1) -- (t6);
\draw (t1) -- (4);
\draw (t3) -- (t1);
\draw (t3) -- (2);
\draw (t3) -- (4);
\draw (t3) -- (5);
\draw (4) -- (5);
\draw (4) -- (t6);
\draw (4) -- (t7);
\draw (5) -- (t7);
\draw (t6) -- (8);
\draw (t7) -- (8);

\draw [dashed] (t701) -- (t7);
\draw [dashed] (t702) -- (t7);
\draw [dashed] (t703) -- (t7);

\draw [dashed] (t601) -- (t6);
\draw [dashed] (t602) -- (t6);
\draw [dashed] (t603) -- (t6);

\draw [dashed] (t301) -- (t3);
\draw [dashed] (t302) -- (t3);
\draw [dashed] (t303) -- (t3);

\draw [dashed] (t101) -- (t1);
\draw [dashed] (t102) -- (t1);
\draw [dashed] (t103) -- (t1);

\end{tikzpicture}
}
}
\caption{Example of a CST instance (a),  in which terminal vertices are squared-shaped, and of the corresponding \pb{} instance for $k = 5$ (b), in which fictitious vertices are diamond-shaped.}\label{fig:non-fixed}
\end{figure}

Next, we define a \pb{} instance $I$ on graph $\bar G$ and
$s = n-k+1$ time slots. 
Note that, since $|R|-1\le k\le n-2$ and $|R|\ge 4$, then $3\le s \le n-2<|\bar V|$, 
hence our definition of the number of time slots excludes the polynomially solvable cases of the \pb{}.

The weights of the time slots are defined as follows:

\begin{align}
w^1_e&=
\begin{cases}
0&\text{if }e\in E^R\\
1&\text{otherwise}
\end{cases}&\label{eqn:regular} \\
w^j_e&=\begin{cases}
n & \text{if }e\in E^R\\
1&\text{otherwise}
\end{cases}&\forall j=2,3,\dots,n-k+1\label{eqn:bigM}
\end{align}

\begin{lemma}\label{lemma:optimal_is_Steiner}
Let $T^\star_1,T^\star_2,\dots,T^\star_{n-k+1}$ be an optimal solution to $I$. The restriction of $T^\star_1$ to the vertices in $V$ is a Steiner tree of $G$.
\end{lemma}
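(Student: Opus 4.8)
The plan is to pin down the structure of the optimal forest tightly enough to recognize the restriction of $T^\star_1$ to $V$ as a Steiner tree. I aim to establish two properties: \textbf{(i)} none of the trees $T^\star_2,\dots,T^\star_{n-k+1}$ uses an edge of $E^R$, and \textbf{(ii)} every terminal of $R$ belongs to $V(T^\star_1)$. Together these give the lemma at once: each \auxiliary\ vertex is adjacent in $\bar G$ only to its own terminal, hence is a leaf of the tree $T^\star_1$; deleting all such leaves leaves the subgraph of $T^\star_1$ induced by $V(T^\star_1)\cap V$ connected and acyclic, and by \textbf{(ii)} it spans $R$, which is precisely the definition of a Steiner tree of $G$.

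To obtain \textbf{(i)} I would bound the optimum from above by a concrete feasible solution and then exclude expensive solutions. Since $G$ is connected, let $T_1$ be a spanning tree of $G$, attach all but $n-k$ of the \auxiliary\ vertices to their terminals through the zero-weight edges of $E^R$, and let the remaining $n-k$ \auxiliary\ vertices be singleton trees. This is a feasible solution with exactly $n-k+1$ trees whose only paid edges are the $n-1$ edges of the spanning tree, each of unit weight in $w^1$; hence the optimum is at most $n-1$. On the other hand, by \eqref{eqn:bigM} every edge of $E^R$ occurring in some $T_i$ with $i\ge 2$ is charged $n$, so any solution using such an edge costs at least $n>n-1$ and cannot be optimal. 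This proves \textbf{(i)}.

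Property \textbf{(i)} forces every \auxiliary\ vertex either to lie in $T^\star_1$ (and then its terminal lies in $T^\star_1$ too, being its unique neighbour) or to form a singleton tree, since any other tree containing it would be non-trivial and would use a forbidden edge of $E^R$. For \textbf{(ii)} I would argue by contradiction: assume some terminal $r$ is not in $V(T^\star_1)$ and that $V(T^\star_1)\cap V\neq\emptyset$. Then each of the $n-k$ \auxiliary\ vertices attached to $r$ is a singleton, and these $n-k$ singleton trees are distinct from $T^\star_1$ (which contains an original vertex). Together with $T^\star_1$ they already account for all $n-k+1$ trees, so no tree is left to contain $r$ --- a contradiction.

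The delicate point I expect to handle with care is the degenerate configuration in which $V(T^\star_1)\cap V=\emptyset$, that is, $T^\star_1$ is itself an \auxiliary\ singleton; this is exactly the case left open above. Here no terminal lies in $T^\star_1$, so by the dichotomy of the previous paragraph \emph{all} $|R|(n-k)$ \auxiliary\ vertices are singletons, and at least one further tree is needed for the $n\ge 4$ original vertices; this forces at least $|R|(n-k)+1$ trees, which exceeds $n-k+1$ since $|R|\ge 4$ --- again a contradiction. Ruling out this case completes \textbf{(ii)} and hence the proof.
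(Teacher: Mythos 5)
Your proof is correct, and it rests on the same two ingredients as the paper's: the explicit feasible solution of value $n-1$ (spanning tree of $G$ plus zero-weight $E^R$ edges, with $n-k$ fictitious singletons filling the remaining slots) and the big-$M$ penalty making any $E^R$ edge in slots $2,\dots,n-k+1$ cost at least $n$. The organization, however, is inverted, and this is where your version adds something. The paper assumes a terminal $r^\star$ lies in a later slot and asserts, justified only by ``since all trees are connected,'' that some $E^R$ edge then lies in a later slot; this one-line step silently skips the configuration in which $r^\star$'s tree contains \emph{none} of its fictitious vertices (all of them being singletons), where the conclusion follows not from connectivity of that tree but from a count of how many trees the solution may contain. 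Your contrapositive arrangement --- first rule out $E^R$ edges in later slots via the $n-1$ upper bound, then deduce that every fictitious vertex is either in $T^\star_1$ or a singleton, then pigeonhole on the number $n-k+1$ of trees to force every terminal into $T^\star_1$ --- makes exactly this counting argument explicit, and also disposes of the degenerate case where $T^\star_1$ is itself a fictitious singleton, which the paper never mentions. In short: the paper's proof is shorter, yours is airtight; your pigeonhole step is precisely the justification that the paper's appeal to connectivity leaves implicit.
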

\begin{proof}
Assume that the restriction of  $T^\star_1$ to the vertices of $G$ is not a Steiner tree. Then there is at least a terminal vertex $r^\star$ contained in a tree of a time slot after the first one. Since all trees $T^\star_1,T^\star_2,\dots,T^\star_{n-k+1}$ are connected,  at least one edge of $E^R$ belongs to that time slot. By~\eqref{eqn:bigM} the optimal solution to $I$ has value at least $n$.
Now we show the existence of a solution with better value. Namely, in the first time slot we consider the tree $T_1$ spanning all vertices of $\bar G$ except the $n-k$ \auxiliary\ vertices linked to $r^\star$ and we set $T_j=\{v^{r^\star}_{j-1}\}$ for $j=2,3,\dots,n-k+1$. Then $T_1,T_2,\dots,T_{n-k+1}$ is a feasible solution whose value is $n-1$ by~\eqref{eqn:regular}.
\end{proof}
Now we can prove the main result. In the proof, given $S\subseteq \bar V$, we denote by $\delta(S)$ its \emph{cut}, namely, the set of edges having one endpoint in $S$ and the other endpoint outside $S$. 
\begin{proposition}\label{prop:prop_1}
There exists a solution to the CST instance given by $k$ and $(G,R)$ if and only if the optimal solution to $I$ has value at most $k$. Therefore the \pb{} is \textbf{NP}-hard.
\end{proposition}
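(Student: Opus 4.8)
The plan is to prove the biconditional by exhibiting a value-preserving correspondence between Steiner trees of $G$ and optimal solutions to $I$, then invoke the polynomial-time reduction to conclude \textbf{NP}-hardness. I would first establish the forward direction: assuming a Steiner tree $T$ of $G$ with at most $k$ edges exists, I would construct a feasible solution to $I$ of value at most $k$. The natural construction is to place in the first time slot the tree $T_1$ obtained by extending $T$ so that it spans all of $V$ together with every \auxiliary\ vertex except those attached to the terminals (or, more cleanly, to absorb the non-terminal vertices of $G$ and leave exactly $n-k$ \auxiliary\ vertices to fill the remaining $n-k$ slots as singletons). The key accounting step is that edges in $E^R$ cost $0$ in the first time slot by~\eqref{eqn:regular}, so only the genuine edges of $G$ used to connect $V$ contribute, and one must check this contribution is at most $k$.

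For the reverse direction I would rely on~\autoref{lemma:optimal_is_Steiner}: any optimal solution $T^\star_1,\dots,T^\star_{n-k+1}$ has its first tree restricting to a Steiner tree $T$ of $G$, so it suffices to bound the number of edges of $T$ by $k$ whenever the optimal value is at most $k$. Here I would count carefully. The total value of the optimal solution is at least the number of weight-$1$ edges it uses, and by the lemma all terminals lie in $T^\star_1$; the remaining $n-k$ slots must each contain at least one vertex, and these can only be \auxiliary\ vertices attached to a single terminal (using the $0$-cost $E^R$ edges only in slot $1$). Counting the vertices of $\bar G$ and using that a tree on $p$ vertices has $p-1$ edges, I expect to derive that $T$ has at least $|V(T)|-1$ real edges and that the $n-k$ singleton slots force $|V(T)\cap V|$ to be large enough that the edge count of $T$ is at most $k$. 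The identity $s=n-k+1$ is engineered precisely so that this bookkeeping closes.

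The main obstacle I anticipate is the reverse counting argument: one must rule out that an optimal solution "cheats" by distributing vertices of $G$ (not just \auxiliary\ vertices) across the later time slots in a way that lowers the first tree's edge count below what the Steiner structure permits, while still keeping total cost $\le k$. Controlling this requires the cut argument flagged in the setup — since later slots use only weight-$1$ edges, any vertex placed outside slot $1$ that is adjacent across $\delta(\cdot)$ must still be internally connected, so the cost saved in slot $1$ is paid back in the later slots. I would formalize this by comparing, edge for edge, the total weight-$1$ edge usage against $|V|-1$ minus savings, showing no such reconfiguration beats the honest Steiner solution. Once both directions hold, the final sentence follows immediately: the construction of $\bar G$, the weight vectors, and $s$ are all computable in time polynomial in the CST input size, so a polynomial-time oracle for the \pb{} would decide CST in polynomial time, and since CST is \textbf{NP}-complete for $|R|\ge 4$, the \pb{} is \textbf{NP}-hard.
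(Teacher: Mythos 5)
There is a genuine gap, and it sits in your forward (``only if'') direction. Your proposed construction puts \emph{all} of $V$ into the first tree $T_1$ (``extending $T$ so that it spans all of $V$'', or ``absorbing the non-terminal vertices of $G$''), with $n-k$ \auxiliary{} vertices left as singletons for the remaining slots. This is feasible, but its value is not at most $k$: since every \auxiliary{} vertex is a leaf of $\bar G$, any tree containing all of $V$ must connect $V$ using only edges of $E$, each of which costs $1$ in slot $1$ by~\eqref{eqn:regular}, so $T_1$ costs at least $n-1$. As $k\le n-2$, this gives value at least $k+1$, and the ``check that this contribution is at most $k$'' which you defer can never succeed. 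The correct construction (the paper's) is the opposite of absorption: first pad $T$ with cut edges until $|T|=k$ exactly (possible since $G$ is connected), then set $T_1=T\cup\delta(\hat V)$ where $\hat V$ is all \auxiliary{} vertices except one reserved vertex $\bar v$; put $\bar v$ alone in slot $2$, and the $n-k-1$ vertices of $V\setminus V(T)$ as singletons in slots $3,\dots,n-k+1$. The padding to exactly $k$ edges is what makes the counts close: $|V\setminus V(T)|=n-k-1$ matches precisely the number of slots left after the tree slot and the \auxiliary{} singleton slot, and the value is exactly $k$ because only the $k$ edges of $T$ cost $1$.

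Your reverse (``if'') direction reaches the right conclusion but by a far heavier route than needed, and along the way asserts something false: the later slots are \emph{not} restricted to \auxiliary{} vertices (non-terminal vertices of $V$ may, and in the construction above do, occupy them); \autoref{lemma:optimal_is_Steiner} only pins the terminals to slot $1$. No ``anti-cheating'' or cut-based bookkeeping is required here. Once the lemma gives that the restriction $T^\star$ of $T^\star_1$ to $V$ is a Steiner tree, the bound $|T^\star|\le k$ is immediate: every edge of that restriction lies in $E$ and costs $1$ in slot $1$ by~\eqref{eqn:regular}, all other weights are nonnegative, so the optimal value (at most $k$ by hypothesis) is at least $|T^\star|$. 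Your final reduction-to-hardness sentence is fine, but the proposition as a whole does not stand until the forward construction is repaired as above.
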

\begin{proof}
For the ``if'' part assume that there exists an optimal solution having value at most $k$; denoting by $T^\star$ the restriction of its tree of the first time slot to the vertices in $V$, the nonnegativity of the weights in~\eqref{eqn:regular} yields $|T^\star|\le k$. Then the result follows from~\autoref{lemma:optimal_is_Steiner}.

Now, let us prove the ``only if'' part. Assume that there
exists a Steiner tree $T$ of $G$ such that $|T|\le k$. We assume, without loss of generality, that $|T|=k$: otherwise we repeatedly update $T$ by adding one edge of $G$ belonging to $\delta(T)$, until reaching the required cardinality (this is always possible as $G$ is connected and since the update always returns a Steiner tree). Then, let $\bar v \in \bar V \setminus V$ be an arbitrary \auxiliary\ vertex, define $\hat{V} = \bar V \setminus \{V \cup \{\bar v\}\}$ as the set of remaining \auxiliary\ vertices, and let $ V^C=V \setminus V(T) = \{v_1,v_2,\dots,v_{n-k-1}\}$ be the vertices in the complement of $T$ in $G$ (as $|T| = k$, $T$ comprises $k + 1$ vertices).
We consider the feasible solution for $I$ given by $T_1=T\cup\delta(\hat{V})$, $T_2=\{\bar v\}$ and $T_j=\{v_{j-2} \in V^C\}$ for every $j=3,4,\dots,n-k+1$.
By~\eqref{eqn:regular}--\eqref{eqn:bigM} such a solution has value $k$. Then the optimal solution to $I$ has value at most $k$.
\end{proof}

In the above construction, $\bar G$ is obtained from $G$ by appending leaves to its terminal vertices. This is a minor modification of the initial graph, hence the \pb{} remains difficult on those classes of graphs which are closed under such modification and on which the CST is \textbf{NP}-complete.
It is the case of \emph{chordal bipartite graphs}, that is, bipartite graphs whose cycles $C$ of length at least 6 induce a subgraph with at least $|C|+1$ edges.
More precisely we have:
\begin{corollary}\label{cor:bip_chord_gr}
The \pb{} is \textbf{NP}-hard on bipartite chordal graphs.
\end{corollary}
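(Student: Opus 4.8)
The goal is to prove \autoref{cor:bip_chord_gr}: the \pb{} is \textbf{NP}-hard on bipartite chordal graphs. The plan is to reuse the reduction of \autoref{prop:prop_1} verbatim, but to verify that it can be started from a CST instance whose graph $G$ is already bipartite chordal, and that the modified graph $\bar G$ remains in this class. Concretely, I would first invoke the known fact, cited in the remark preceding the corollary, that the CST is \textbf{NP}-complete on bipartite chordal graphs; this guarantees a hard source problem living in the desired graph class. The reduction itself does not change, so the equivalence ``Steiner tree with at most $k$ edges exists $\iff$ optimal value of $I$ is at most $k$'' carries over unchanged from \autoref{prop:prop_1}.

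The only thing that genuinely needs checking is that the construction $G \mapsto \bar G$ preserves the property of being bipartite chordal. Here I would argue as follows. The passage from $G$ to $\bar G$ consists solely of appending, to each terminal $r \in R$, a set of \auxiliary\ vertices each joined to $r$ by a single edge; equivalently, $\bar G$ is obtained from $G$ by repeatedly adding a new degree-one vertex (a pendant leaf). I would show that adding a pendant vertex preserves bipartiteness and preserves chordality in the bipartite sense. Bipartiteness is immediate: a new leaf adjacent only to $r$ can be placed in the color class opposite to that of $r$, extending the $2$-coloring of $G$. For the chordal-bipartite property, the key observation is that a pendant vertex has degree $1$ and therefore lies on no cycle at all; hence every cycle of $\bar G$ is already a cycle of $G$ and induces exactly the same subgraph in $\bar G$ as in $G$. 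Consequently the defining condition---every cycle $C$ of length at least $6$ induces a subgraph with at least $|C|+1$ edges---holds in $\bar G$ if and only if it holds in $G$. Since these additions are applied finitely many times, $\bar G$ is bipartite chordal whenever $G$ is.

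Combining these two ingredients finishes the argument: starting from a CST instance $(G,R,k)$ with $G$ bipartite chordal, the polynomial-time construction of \autoref{sec:unfixed} produces a \pb{} instance $I$ whose underlying graph $\bar G$ is again bipartite chordal, and by \autoref{prop:prop_1} a polynomial-time oracle for the \pb{} on this class would decide the CST on bipartite chordal graphs in polynomial time. This yields the claimed \textbf{NP}-hardness.

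I expect the main obstacle to be the chordal-bipartite verification rather than the reduction, which is inherited for free. The subtle point is that the defining no-chord condition is stated in terms of \emph{induced} subgraphs on cycles, so one must be careful that appending leaves creates no new cycles and alters no existing induced cycle subgraph; the degree-one argument settles this cleanly, but it is the step where a naive ``obvious closure'' claim could hide an error, so it deserves to be spelled out.
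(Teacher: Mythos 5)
Your proposal is correct and follows essentially the same route as the paper: invoke the \textbf{NP}-completeness of the CST on chordal bipartite graphs due to~\cite{muller1987np}, and observe that the construction of \autoref{sec:unfixed} only appends pendant leaves, which preserves chordal bipartiteness. The only difference is that the paper asserts this closure property in one line, whereas you prove it explicitly via the degree-one argument (pendant vertices lie on no cycle, so cycles and their induced subgraphs are unchanged), which is a sound and worthwhile elaboration of the same idea.
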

\begin{proof}
Appending leaves to a subset of vertices of a bipartite chordal graph maintains the chordal bipartiteness.
Then the result follows from the \textbf{NP}-completeness of the CST on bipartite chordal graphs proved by~\cite{muller1987np}.
\end{proof}

\section{NP-hardness when the number of time slots is fixed}
\label{sec:fixed}
In this section we consider the complexity of the \pb{} by assuming that we have $s = \bar s$ time slots, with $\bar s\geq 2$ fixed, and we show that the problem remains \textbf{NP}-hard.

We modify the approach of previous section as follows. 
Let us consider a graph $G=(V,E)$ with $|V|=n$ vertices and a set $R \subset V$, $|R| \geq 4$, of terminal vertices defining an instance of the CST problem.
We define a graph $G^\star=(V^\star,E^\star)$ where $V^\star=V\cup V^R$, with $V^R=\{v^r_1,v^r_2,\dots,v^r_{\bar s-1}\colon r\in R\}$ being a set of \auxiliary\ vertices associated with those in $R$, and where $E^\star=E\cup E^C\cup E^R$, with $E^C=\{(v,w)\colon v,w\in V\text{ s.t. } (v,w)\not\in E\}$ and $E^R=\{(r,v^r_j)\colon r\in R,j=1,2,\dots,\bar s-1\}$. That is, $G^\star$ is obtained by extending $G$ to a complete graph and by linking each terminal vertex in $G$ to the corresponding $\bar s-1$ \auxiliary\ vertices (see~\autoref{fig:fixed} for an example). 

\begin{figure}
\centering

\subfigure[][]{
\resizebox{0.25\textwidth}{!}{
\begin{tikzpicture}[scale=.45, transform shape]
\tikzstyle{every node} = [circle,scale=1.5,fill=white,draw=black]

\node[rectangle,scale=1.2](t1) at (6,0.0) {1};
\node[rectangle,scale=1.2] (t3) at (12,0) {3};
\node[rectangle,scale=1.2] (t6) at (6,3.5) {6};
\node[rectangle,scale=1.2] (t7) at (12,3.5) {7};

\node (2) at (9.0,-2.0) {2};
\node (4) at (9,1.75) {4};
\node (5) at (12,1.75) {5};
\node (8) at (9,5.0) {8};

\draw (t1) -- (2);
\draw (t1) -- (t6);
\draw (t1) -- (4);
\draw (t3) -- (t1);
\draw (t3) -- (2);
\draw (t3) -- (4);
\draw (t3) -- (5);
\draw (4) -- (5);
\draw (4) -- (t6);
\draw (4) -- (t7);
\draw (5) -- (t7);
\draw (t6) -- (8);
\draw (t7) -- (8);
\end{tikzpicture}
}
}
\subfigure[][]{
\resizebox{0.4\textwidth}{!}{
\begin{tikzpicture}[scale=.45, transform shape]
\tikzstyle{every node} = [circle,scale=1.5,fill=white,draw=black]

\node[diamond,scale=0.6] (t101) at (4,0.75) {\Large $v^1_1$};
\node[diamond,scale=0.6] (t102) at (4,-0.75) {\Large $v^1_2$};

\node[diamond,scale=0.6] (t301) at (14,0.75) {\Large $v^3_1$};
\node[diamond,scale=0.6] (t302) at (14,-0.75) {\Large $v^3_2$};

\node[diamond,scale=0.6] (t601) at (4,4.25) {\Large $v^6_1$};
\node[diamond,scale=0.6] (t602) at (4,2.75) {\Large $v^6_2$};

\node[diamond,scale=0.6] (t701) at (14,4.25) {\Large $v^7_1$};
\node[diamond,scale=0.6] (t702) at (14,2.75) {\Large $v^7_2$};

\node[rectangle,scale=1.2](t1) at (6,0.0) {1};
\node[rectangle,scale=1.2] (t3) at (12,0) {3};
\node[rectangle,scale=1.2] (t6) at (6,3.5) {6};
\node[rectangle,scale=1.2] (t7) at (12,3.5) {7};

\node (2) at (9.0,-2.0) {2};
\node (4) at (9,1.75) {4};
\node (5) at (12,1.75) {5};
\node (8) at (9,5.0) {8};

\draw (t1) -- (2);
\draw (t1) -- (t6);
\draw (t1) -- (4);
\draw (t3) -- (t1);
\draw (t3) -- (2);
\draw (t3) -- (4);
\draw (t3) -- (5);
\draw (4) -- (5);
\draw (4) -- (t6);
\draw (4) -- (t7);
\draw (5) -- (t7);
\draw (t6) -- (8);
\draw (t7) -- (8);

\draw [dashed] (t601) -- (t6);
\draw [dashed] (t602) -- (t6);

\draw [dashed] (t701) -- (t7);
\draw [dashed] (t702) -- (t7);

\draw [dashed] (t301) -- (t3);
\draw [dashed] (t302) -- (t3);

\draw [dashed] (t101) -- (t1);
\draw [dashed] (t102) -- (t1);

\draw [densely dotted] (2) -- (4);
\draw [densely dotted] (2) -- (5);
\draw [densely dotted] (2) -- (t6);
\draw [densely dotted] (2) -- (t7);
\draw [densely dotted] (2) .. controls(7.5,2) .. (8);

\draw [densely dotted] (t3) .. controls(9,0.5) .. (t6);
\draw [densely dotted] (t3)  .. controls(13,1.75) .. (t7);
\draw [densely dotted] (t3) -- (8);

\draw [densely dotted] (4) -- (8);

\draw [densely dotted] (5) -- (t6);
\draw [densely dotted] (5) -- (8);
\draw [densely dotted] (5) -- (t1);

\draw [densely dotted] (t6) -- (t7);

\draw [densely dotted] (t7) .. controls(9,0.5) .. (t1);

\end{tikzpicture}
}
}
\caption{Example of a CST instance (a), in which terminal vertices are squared-shaped, and of the corresponding \pb{} instance for $\bar s = 3$ (b), in which fictitious vertices are diamond-shaped.}\label{fig:fixed}
\end{figure}
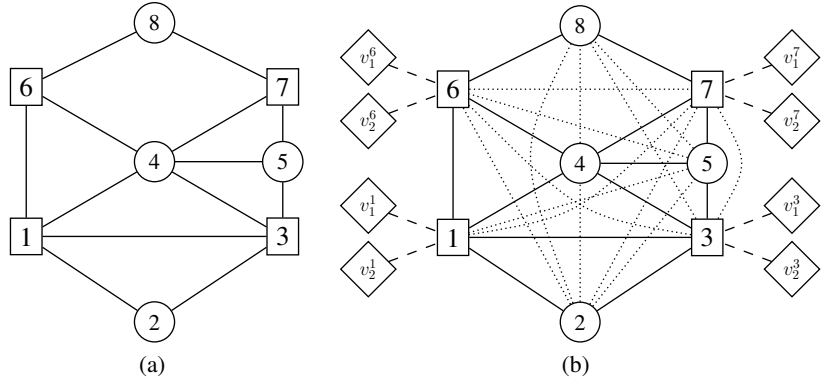

For every $e\in E^\star$ we define the following edge weights:
\begin{align}
w^1_e&=
\begin{cases}
0&\text{if }e\in E^R\\
1&\text{if }e\in E\\
n&\text{otherwise,}
\end{cases}&\label{eqn:ts1_fixed} \\
w^j_e&=
\begin{cases}
n&\text{if }e\in E^R\\
0&\text{otherwise.}
\end{cases}&\forall j=2,3,\dots,\bar s\label{eqn:ts2_fixed}
\end{align}

Let $I^\star$ be the resulting \pb{} instance. 

A Steiner tree $T$ of $G$ with $k$ edges corresponds to a solution $T_1,T_2,\dots,T_{\bar s}$ of $I^\star$ having value $k$. We distinguish two cases: 
\begin{enumerate}
    \item if $T$ is not spanning, let $r\in R$ be an arbitrary terminal vertex of $G$ and let $v_1^r,v_2^r,\dots,v_{\bar s-2}^r$ be $\bar s-2$ arbitrary fictitious vertices linked to $r$. One obtains $T_1$ by extending $T$ with all vertices in $V^R\setminus \{v_1^r,v_2^r,\dots,v_{\bar s-2}^r\}$ (whose linking edges in $E^R$  have weight 0 in the first time slot, by~\eqref{eqn:ts1_fixed}), by defining $T_2$ as the spanning tree of the complete graph $G^\star\setminus V(T_1)$ involving only edges in $E\cup E^C$ (which have weight 0 in the second time slot, by~\eqref{eqn:ts2_fixed}) and by defining $T_j=\{v_{j-2}^r\}$ for every $j=3,4,\dots\bar s$;\label{case:non-spanning}
    \item if $T$ is spanning, let $r\in R$ be an arbitrary terminal vertex of $G$ and let $v_1^r,v_2^r,\dots,v_{\bar s-1}^r$ be the $\bar s-1$ fictitious vertices linked to $r$. One obtains $T_1$ by extending $T$ with all vertices in $V^R\setminus\{v_1^r,v_2^r,\dots,v_{\bar s-1}^r\}$, and by defining $T_j=\{v^r_{j-1}\}$ for every $j=2,3,\dots,\bar s$.\label{case:spanning}
\end{enumerate}

Note that, since a spanning tree of $G$ is also a Steiner tree, the construction in the above case~\ref{case:spanning} shows that an optimal solution to $I^\star$ has value at most $n-1$.
Then, as in~\autoref{lemma:optimal_is_Steiner}~and ~\autoref{prop:prop_1}, it is possible to state that if $T^\star_1,T^\star_2,\dots,T^\star_{\bar s}$ is an optimal solution to $I^\star$, the restriction of $T^\star_1$ to the vertices in $V$ is a Steiner tree of $G$ having the same value.
Indeed, we first observe that $T^\star_1$ has its edges in $E\cup E^R$, as otherwise~\eqref{eqn:ts1_fixed} would imply that  the considered solution has weight at least $n$, contradicting its optimality; moreover, if $T^\star_1$ is not a Steiner tree of $G$, there should be a vertex $r\in R$ belonging to $T^\star_j$ with $2\le j\le \bar s$ and, since $T^\star_1,T^\star_2,\dots,T^\star_{\bar s}$ are connected, we have that at least one edge of $E^R$ is taken outside the first time slot; then by~\eqref{eqn:ts2_fixed}, the considered solution has value at least $n$, again contradicting its optimality.

The above arguments prove that the considered CST instance admits a solution if and only if the corresponding \pb{} instance has value at most $k$, hence we have:

\begin{proposition}\label{prop:nphard_fixed}
The \pb{} with a fixed number $\bar s\ge 2$ of time slots is \textbf{NP}-hard.
\end{proposition}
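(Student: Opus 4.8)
The plan is to argue, exactly as in \autoref{sec:unfixed}, that the optimal value of $I^\star$ equals the number of edges in a minimum Steiner tree of $G$, so that an oracle solving the \pb{} with $s=\bar s$ fixed immediately decides the CST instance. Since the construction of $G^\star$ and the weights \eqref{eqn:ts1_fixed}--\eqref{eqn:ts2_fixed} are clearly polynomial in the size of $(G,R)$ (the graph gains $(\bar s-1)|R|$ fictitious vertices and is completed, with $\bar s$ fixed), the reduction is polynomial. Hence the whole argument reduces to establishing the biconditional already asserted in the paragraph preceding the statement: the CST instance admits a Steiner tree with at most $k$ edges if and only if the optimal value of $I^\star$ is at most $k$.

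For the ``only if'' direction I would invoke the explicit feasible-solution constructions given in cases~\ref{case:non-spanning} and~\ref{case:spanning} above. A Steiner tree $T$ with $k$ edges is packed into the first time slot together with all but $\bar s-1$ (resp.\ $\bar s-2$) fictitious vertices hanging off a single terminal $r$; by \eqref{eqn:ts1_fixed} the $E^R$-edges absorbed into $T_1$ cost $0$, so the first slot costs exactly $k$. The leftover real vertices are collected in a second tree $T_2$ spanning the complete graph $G^\star\setminus V(T_1)$ using only $E\cup E^C$ edges, which by \eqref{eqn:ts2_fixed} cost $0$, and each remaining fictitious vertex becomes a singleton tree of cost $0$. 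One must check that this yields \emph{exactly} $\bar s$ non-empty vertex-disjoint trees covering $V^\star$: in the non-spanning case the $\bar s-2$ fictitious singletons plus $T_1,T_2$ give $\bar s$ trees, and in the spanning case $T_2$ is unnecessary, replaced by the $\bar s-1$ fictitious singletons, again totalling $\bar s$. Thus the optimum of $I^\star$ is at most $k$.

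For the ``if'' direction I would reproduce the two-step optimality argument sketched after the cases. Let $T^\star_1,\dots,T^\star_{\bar s}$ be optimal with value at most $k\le n-2<n$. First, $T^\star_1$ uses no edge of $E^C$, for such an edge costs $n$ by \eqref{eqn:ts1_fixed} and would violate the $n-1$ upper bound guaranteed by case~\ref{case:spanning}; so the edges of $T^\star_1$ lie in $E\cup E^R$. Second, every terminal $r\in R$ must lie in $T^\star_1$: otherwise $r$ sits in some $T^\star_j$, $j\ge 2$, and since that tree is connected and the only edges incident to the fictitious vertices are in $E^R$, at least one $E^R$-edge is used outside the first slot, incurring cost $n$ by \eqref{eqn:ts2_fixed}, again contradicting optimality. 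Therefore the restriction of $T^\star_1$ to $V$ contains $R$ and is connected, i.e.\ a Steiner tree of $G$; by the nonnegativity of the weights in \eqref{eqn:ts1_fixed} its edge-count is at most the value of $T^\star_1$, hence at most $k$.

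The main subtlety, and the step I would handle most carefully, is the ``if'' direction: one must ensure that the restriction of $T^\star_1$ to $V$ is genuinely a tree and not merely a Steiner-connected subgraph. Because $T^\star_1$ is a tree in $\bar G$ whose only non-$E$ edges are the $E^R$-bridges leading to degree-one fictitious vertices, deleting those leaves leaves a connected acyclic subgraph of $G$ spanning $R$, so it is indeed a Steiner tree with exactly as many $E$-edges as $T^\star_1$ has. Assembling these two directions gives the claimed equivalence, and since CST is \textbf{NP}-complete (for $|R|\ge 4$ by~\citep{arrighi_et_al}) while the reduction is polynomial, \autoref{prop:nphard_fixed} follows.
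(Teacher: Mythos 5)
Your proposal is correct and follows essentially the same route as the paper: the same instance $I^\star$, the same constructions of cases~\ref{case:non-spanning} and~\ref{case:spanning} for showing that a Steiner tree with $k$ edges yields a solution of value $k$, and the same optimality argument for the converse (no $E^C$ edge in the first slot thanks to the $n-1$ upper bound, and no $E^R$ edge outside it because every terminal must lie in $T^\star_1$), with your leaf-deletion remark nicely making explicit a point the paper leaves implicit. The only blemishes are cosmetic: you write $\bar G$ where you mean $G^\star$, and the \textbf{NP}-completeness of the CST is due to~\cite[p.~209]{GJ90}, while \cite{arrighi_et_al} only justifies the restriction to instances with $|R|\ge 4$.
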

 
We remark that the transformation from $G$ to $G^\star$ used in the above reduction does not allow to state a result similar to~\autoref{cor:bip_chord_gr}.

\section{Structured polynomially-solvable cases}\label{sec:special}

The results of~\autoref{prop:prop_1} and~\autoref{prop:nphard_fixed} hold without making any assumption on the structure of the \pb{} instances.
Here we present two polynomially-solvable cases when the input is structured. The first one generalizes the approach described in the Introduction for the case $s=|V|$.

\begin{observation}
When $|V|-s$ is constant the \pb{} is solvable in polynomial time. 
\end{observation}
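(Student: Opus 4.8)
The plan is to exploit a simple counting identity. Since the trees $T_1,\dots,T_s$ partition the $n=|V|$ vertices and a tree on $m$ vertices has exactly $m-1$ edges, any feasible solution uses a total of $\sum_{i=1}^s(|V(T_i)|-1)=n-s$ edges. Writing $c=|V|-s$ for the (constant) parameter, every feasible solution therefore corresponds to a spanning forest of $G$ with exactly $c$ edges, whose $s=n-c$ connected components are precisely the trees to be scheduled. Conversely, any acyclic edge set $F\subseteq E$ with $|F|=c$ induces a spanning forest whose components are automatically $s$ non-empty trees, hence a valid tree partition; this correspondence is exact because a forest on $n$ vertices with $c$ edges has precisely $n-c=s$ components.

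First I would enumerate all subsets $F\subseteq E$ with $|F|=c$ and discard those containing a cycle. Since $c$ is constant there are $O(|E|^c)$ such subsets, and acyclicity is tested in polynomial time, so the enumeration is polynomial and ranges over exactly the feasible tree partitions. Each surviving forest fixes the partition of $V$ into the $s$ trees, leaving only the assignment of these trees to time slots to be optimized.

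Second, for a fixed forest I would solve the time-slot assignment as a minimum-cost bipartite matching. A singleton tree has no edges and thus costs $0$ in every time slot, so only the non-trivial components affect the objective; as the forest has $c$ edges there are at most $c$ of them. Taking the cost of assigning a tree $T$ to slot $i$ to be $w^i(T)$, we seek a bijection between trees and slots minimizing the sum of these costs, which is a standard assignment problem solvable in polynomial time (\emph{e.g.}, by the Hungarian algorithm), and whose optimum is independent of how the singletons fill the remaining slots. Returning the minimum total cost over all enumerated forests then yields an optimal solution to the \pb{} instance.

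The counting identity and the reduction to matching are routine; the only point demanding care is the \emph{completeness} of the enumeration, namely verifying that ranging over acyclic edge sets of size $c$ captures every feasible tree partition exactly once in essence (no feasible partition missed, no spurious forest produced). This is secured by the component-count argument above, after which the overall polynomiality follows from combining the $O(|E|^c)$ enumeration with the polynomial matching subroutine.
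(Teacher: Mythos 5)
Your proof is correct, and it is in fact more complete than the paper's own argument. The paper works out only the case $s=|V|-1$ in detail (there each feasible solution has a single non-trivial tree consisting of one edge, so one exhaustively lists the $O(|V||E|)$ values $w^j_e$ and takes the minimum) and then merely asserts that ``a similar algorithm (of higher time complexity)'' exists for any constant value of $|V|-s$. You supply exactly what that assertion needs: the counting identity showing that feasible vertex partitions are precisely the spanning forests with $c=|V|-s$ edges (hence $O(|E|^c)$ candidates, polynomially many for constant $c$, each checkable for acyclicity in polynomial time), plus a polynomial subroutine for the remaining slot assignment. Your matching step is also a genuine refinement rather than a copy of the paper's device for this observation: for $c=1$ the paper folds the assignment into the enumeration by listing edge--slot pairs, and for general constant $c$ one could still brute-force the $O(s^c)$ injective assignments of the at most $c$ non-trivial trees; your Hungarian-algorithm formulation is cleaner and in fact mirrors the technique the paper itself uses only later, in its observation for tree-structured inputs, where trees are assigned to time slots via a weighted bipartite perfect matching. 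Both arguments draw polynomiality from the same source --- the constant bound on $|V|-s$ --- but yours turns the paper's sketch into a full proof of the stated generality.
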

\begin{proof}
When $s=|V|-1$, a feasible solution contains one edge in a time slot and single vertices in all remaining time slots; then an optimal solution can be determined in $O(|V||E|)$ time by exhaustively listing all values $w^j_e$ for $e\in E$ and $1\le j\le |V|-1$ and considering the minimum one. A similar algorithm (of higher time complexity) can be exhibited for any constant value of $|V|-s$.
\end{proof}

The second polynomially-solvable case relates to the graph topology:
\begin{observation}\label{obs:trees}
If $G=(V,E)$ is a tree, the \pb{} with a fixed number $\bar s\ge 2$ of time slots is solvable in polynomial time.
\end{observation}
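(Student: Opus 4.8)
The goal is to show that when $G=(V,E)$ is a tree, the \pb{} with a fixed number $\bar s\ge 2$ of time slots admits a polynomial-time algorithm. The plan is to exploit the fact that deleting $\bar s-1$ edges from a tree on $n$ vertices produces exactly $\bar s$ connected components, each of which is automatically a subtree. Thus every feasible solution corresponds bijectively to a choice of $\bar s-1$ edges to remove (the \emph{cut edges}), together with an assignment of the resulting $\bar s$ components to the $\bar s$ time slots. Since $G$ has $n-1$ edges, there are only $\binom{n-1}{\bar s-1}=O(n^{\bar s-1})$ ways to choose the cut set, which is polynomial for fixed $\bar s$. For each fixed cut set, the weight of a component placed in slot $j$ is the sum of its edge weights under $w^j$, so optimally assigning the $\bar s$ components to the $\bar s$ slots is a balanced assignment problem solvable in $O(\bar s^3)$ time (or, for constant $\bar s$, by brute force over the $\bar s!$ permutations).

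First I would formalize the correspondence between feasible solutions and edge-cuts: because $G$ is a tree, any partition of $V$ into $\bar s$ nonempty vertex-disjoint subtrees is obtained by removing precisely $\bar s-1$ edges, and conversely removing any $\bar s-1$ edges yields $\bar s$ subtrees. This reduces the search space from arbitrary vertex partitions to subsets of $E$ of size $\bar s-1$. Next I would describe the enumeration: iterate over all $O(n^{\bar s-1})$ cut sets $F\subseteq E$ with $|F|=\bar s-1$, and for each compute the $\bar s$ resulting components and their weights $w^j(\cdot)$ in each slot. Then, for a fixed cut set, I would set up the cost matrix whose $(i,j)$ entry is the weight of component $i$ when assigned to slot $j$, and solve the resulting assignment problem to obtain the cheapest slot-assignment for that cut set. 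Finally I would take the minimum over all cut sets.

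For the complexity bookkeeping, computing the components and their per-slot weights for a single cut set costs $O(\bar s\, n)$ time, and the assignment step costs $O(\bar s^3)$; multiplying by the $O(n^{\bar s-1})$ cut sets gives an overall running time polynomial in $n$ for any fixed $\bar s$. The main subtlety—rather than a genuine obstacle—is verifying the bijection between cut sets and feasible partitions and ensuring the assignment step correctly pairs components with slots; but both are elementary, and the crux of the argument is simply that fixing $\bar s$ caps the number of cut sets at a polynomial in $n$. I would conclude that the described procedure runs in polynomial time and returns an optimal solution.
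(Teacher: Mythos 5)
Your proposal is correct and follows essentially the same route as the paper: enumerate the $O(n^{\bar s-1})$ subsets of $\bar s-1$ edges whose removal splits the tree into $\bar s$ subtrees, and for each subset optimally assign subtrees to time slots via a weighted bipartite matching (assignment) problem, taking the overall minimum. Your additional justification of the bijection between cut sets and feasible partitions is a harmless elaboration of what the paper leaves implicit.
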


\begin{proof}
We can list in $O(n^{\bar s-1})$ all subsets of $\bar s-1$ edges whose removal decomposes $G$ into a forest with $\bar s$ trees.
For each such a subset we assign in polynomial time the corresponding trees to the $\bar s$ time slots solving a perfect matching on the weighted complete bipartite graph $\mathcal B=(\mathcal T;\mathcal S,\mathcal W)$ where each vertex in $\mathcal T$ represents a tree, each vertex of $\mathcal S$ represents a time slot and edge $e_{\tau\sigma}\in\mathcal W$ linking $\tau\in\mathcal T$ to $\sigma\in\mathcal S$ has weight $w^\sigma(\tau)$.
\end{proof}

\autoref{obs:trees} motivates the following questions that we leave open: $(i)$ When the number of time slots is not fixed, which is the complexity of the \pb{} defined on trees? $(ii)$ Are there any other graph families (other than trees) for which the \pb{} is solvable in polynomial time, at least when the number of time slots is fixed?

\section*{Acknowledgments}
The authors are grateful to Alberto Ceselli and to Emiliano Lancini for their comments on  the manuscript.

\bibliographystyle{natbib}
\bibliography{biblio} 

\begin{thebibliography}{}

\bibitem[Adasme(2019)]{adasme2019optimal}
Adasme, P. (2019).
\newblock Optimal sub-tree scheduling for wireless sensor networks with partial
  coverage.
\newblock {\em Computer Standards \& Interfaces}, {\bf 61}, 20--35.

\bibitem[Arrighi and de~Oliveira~Oliveira(2021)]{arrighi_et_al}
Arrighi, E. and de~Oliveira~Oliveira, M. (2021).
\newblock {Three Is Enough for Steiner Trees}.
\newblock In D.~Coudert and E.~Natale, editors, {\em 19th International
  Symposium on Experimental Algorithms (SEA 2021)}, volume 190 of {\em Leibniz
  International Proceedings in Informatics (LIPIcs)}, pages 5:1--5:15,
  Dagstuhl, Germany. Schloss Dagstuhl -- Leibniz-Zentrum f{\"u}r Informatik.

\bibitem[Bianchessi(2022)]{bianchessi2022}
Bianchessi, N. (2022).
\newblock On optimally solving sub-tree scheduling for wireless sensor networks
  with partial coverage.
\newblock {\em Universit\`a degli Studi di Milano, {\normalfont
  http://hdl.handle.net/2434/934107}}.

\bibitem[Garey and Johnson(1990)]{GJ90}
Garey, M.~R. and Johnson, D.~S. (1990).
\newblock {\em Computers and Intractability; A Guide to the Theory of
  NP-Completeness}.
\newblock W. H. Freeman \& Co., USA.

\bibitem[Kruskal(1956)]{Kruskal56}
Kruskal, J.~B. (1956).
\newblock On the shortest spanning subtree of a graph and the traveling
  salesman problem.
\newblock {\em Proceedings of the American Mathematical Society}, {\bf 7}(1),
  48--50.

\bibitem[M{\"u}ller and Brandst{\"a}dt(1987)]{muller1987np}
M{\"u}ller, H. and Brandst{\"a}dt, A. (1987).
\newblock The {NP}-completeness of {Steiner} tree and dominating set for
  chordal bipartite graphs.
\newblock {\em Theoretical Computer Science}, {\bf 53}(2-3), 257--265.

\end{thebibliography}

\end{document}